\documentclass[a4paper]{article}
\usepackage[left=2cm,right=2cm,top=2.5cm,bottom=2.5cm]{geometry}
\usepackage{amssymb}
\usepackage{physics}
\usepackage{amsmath}
\usepackage{amsthm}
\usepackage{graphicx}
\usepackage{mathrsfs}
\usepackage[colorlinks=true,linkcolor=blue,citecolor=blue,urlcolor=blue]{hyperref}

\renewcommand{\d}{\mathrm{d}}
\newcommand{\e}{\mathrm{e}}
\newtheorem{thm}{Theorem}

\title{Quantum speed limits based on quantifiers of quantum-state texture}
\author{Yuhang Xie$^1$, Yanjun Chu$^1$\thanks{Corresponding author. Email: chuyj@henu.edu.cn}, Chenyang Cui$^1$, Shao-Ming Fei$^2$ \\[8pt]
$^1$School of Mathematics and Statistics, Henan University,\\
Kaifeng, 475004, China \\
$^2$School of Mathematical Sciences, Capital Normal University, \\
Beijing, 100048, China}
\date{}

\begin{document}

\maketitle

\begin{abstract}
Quantum speed limits impose intrinsic lower bounds on the shortest time scale for quantum system evolution. As an emerging paradigm in quantum resource theory, quantum-state texture has attracted  research interest amid the rapid advancement of quantum theory. Herein, we investigate the interplay between quantum speed limits and quantum-state texture via several canonical quantifiers, including trace distance, state rugosity and Jensen-Shannon divergence. To demonstrate our findings, we analyze the minimum evolution time of physical systems subject to dephasing and dissipative dynamics. For the Jensen-Shannon divergence, we further explore nonunitary dynamics described by completely positive and trace-preserving maps, taking the amplitude damping channel as a typical example. In addition, we explore the tightness of these bounds in the considered dynamical models. Our results reveal that quantum speed limits derived from quantum-state texture capture the fundamental constraints on quantum evolutionary speed, with promising applications in quantum computing, quantum control and quantum metrology.
\end{abstract}
\section{Introduction}\label{sect1}
The quantum speed limit (QSL) provides a fundamental bound on the evolution time of quantum systems undergoing arbitrary physical processes \cite{MT,ML,anandan1990,levitin2009,shanahan2018}, and plays a crucial role in advanced quantum technologies, encompassing optimal control  \cite{caneva2009,kobayashi2020}, quantum metrology \cite{giovannetti2006,toth2014JPA,macri2016PRA,campbellQST025002}, quantum thermodynamics \cite{deffner2010PRL,campisi2011RMP,goold2016JPA,aghion2023JPA,hasegawa2023NC}, communication and quantum computing \cite{bekenstein2000Nature,lioyd2000Nature}, as well as quantum batteries \cite{mohan2021PRA,gyhm2024PRA,juliafarre2020PRR}. For closed quantum systems, Mandelstam and Tamm \cite{MT} established a fundamental lower bound on the unitary evolution time connecting the initial state $\psi_0$ to the final state $\psi_{\tau}$. Specifically, the evolution duration satisfies $\tau\geqslant\tau_{MT}:=(\hbar/\Delta E)\arccos(|\braket{\psi_0}{\psi_{\tau}}|)$, where $\Delta E=\sqrt{\langle H^2\rangle-\langle H\rangle^2}$ denotes the variance of a time-independent Hamiltonian $H$ governing the unitary dynamics. In the special case of orthogonal states transition with $\braket{\psi_0}{\psi_{\tau}}=0$, this bound reduces to $\tau_{MT}=\pi\hbar/2\Delta E$. Subsequently,  Margolus and Levitin \cite{ML} developed an alternative QSL bound, $\tau\geqslant\tau_{ML}=\pi\hbar/2(\langle\psi_0|H|\psi_0\rangle-E_0)$, with $E_0$ corresponding to the ground-state energy of the quantum system.

In Ref.~\cite{rudnicki2021PRA}, the QSL is investigated in terms of geometric measure of entanglement. Refs.~\cite{mohan2022NJP,mai2024PRA} further derived QSLs based on quantum coherence, uncovering deep connections between quantum coherence and QSLs. As coherence depends on the choice of basis,  Xiao \textit{et al.} \cite{xiaoPRA032438} studied the QSL governed by coherence measures defined in the basis associated with the spectral decomposition of the time-dependent Hamiltonian. In this scenario, the incoherent states become time-dependent and satisfy  \([H_t, \sigma_t] = 0\) for   the
time-dependent Hamiltonian $H_t$ and  arbitray incoherent state $\sigma_t$. Correspondingly, their coherence measures are also time-dependent quantities that characterize the distance between the evolved state \(\rho_t\) and the instantaneously evolving incoherent set. Moreover, Xuan \textit{et al.}~\cite{xuan2025PRA} explored the evolution times of quantum systems by deriving QSL bounds based on three distinct measures of imaginarity. The generation and degradation of imaginarity are studied under several typical quantum dynamical scenarios including dephasing dynamics, dissipative dynamics and stochastic-approximate transformations. Correspondingly, the concept of the resource speed limit (RSL) has recently been introduced in Ref. \cite{Campaioli2022}, which quantifies the maximum rate at which quantum resources can be generated or degraded by physical  processes. Meanwhile, we also note that various entropic quantities have recently been adopted to establish QSLs, including the unified entropy \cite{piresPRA012403}, the generalized relative entropy \cite{sousaPRA012203}, and the Sharma–Mittal entropy \cite{xuane00383}. Quantum speed limits based on the Jensen–Shannon divergence and Jeffreys divergence \cite{sousaPRA042419}, as well as those formulated in terms of various fidelity measures \cite{osanPRA022443}, have also been explored.

The notion of quantum-state texture (QST) was first proposed by F. Parisio \cite{parisio2024} as a novel quantum resource intrinsically associated with coherence and imaginarity. A straightforward interpretation of QST is as follows: For a given computational basis $\{\ket{i}\}$, the density matrix $\rho$ can be visualized as a three-dimensional plot, where the row and column indices of the matrix correspond to the horizontal dimensions, and the magnitudes of the real or imaginary parts of each matrix entry $\rho_{ij}$ serve as the vertical altitude in two complementary subplots (see FIG. 1 of Ref. \cite{parisio2024}). Within this geometric description, every quantum state corresponds to a pair of three-dimensional plots for the real and imaginary components of $\rho$. The inhomogeneity of these plots, i.e., their departure from a uniform distribution of matrix entries, characterizes the QST associated with the quantum state. Recently, the quantification and physical applications of texture resource theory have gained considerable attention \cite{wang2025,zhang2025,muthuganesan2025,huangQIP388,cao2026JPAMT,PatraPRA022411,greenwood2602.22496v1}.

We explore the connections between QSLs and QST via several representative QST measures, including the trace distance, state rugosity and Jensen–Shannon divergence. As practical applications, we explore the dephasing and dissipative dynamics using QSLs constructed from the trace distance and state rugosity, and analyze the tightness of QSLs under these dynamics processes. For the QST measure based on Jensen-Shannon divergence, we investigate the minimal time required for the evolution of physical systems subject to completely positive and trace-preserving (CPTP) maps, and take the amplitude damping channel as a concrete example to illustrate our findings. These QSLs derived from QST measures characterize the fundamental limits on the speed of quantum evolution, which reveals that QST plays a role in dynamical evolution analogous to quantum coherence \cite{mohan2022NJP,mai2024PRA} and iquantum maginarity \cite{xuan2025PRA}, with potential applications in quantum information processing.

This paper is organized as follows. In Sect. \ref{sect2}, we briefly review the resource theory of texture and present some QST measures adopted throughout this paper. In Sect. \ref{sect3}, we derive QSLs from three representative QST measures: trace distance, roughness and Jensen–Shannon divergence. We illustrate our results under dephasing dynamics and dissipative dynamics, as well as the amplitude damping channel. Finally, we conclude with a summary in Sect. \ref{sect4}.

\section{Preliminaries}\label{sect2}

Analogous to the coherence and imaginarity, the texture depends on the choice of the orthonormal reference basis $\{|i\rangle\}_{i=0}^{d-1}$ of the $d$-dimensional Hilbert space $\mathscr{H}$. Denote $\mathcal{D}(\mathscr{H})$ the collection of all quantum states. In the resource theory of quantum-state texture, the only free state \cite{ChitambarGour2019} is the textureless state \cite{parisio2024},
\begin{equation}\label{textureless state}
f=\ketbra{f}{f},
\end{equation}
where $|f\rangle = \frac{1}{\sqrt{d}} \sum_{i=0}^{d-1} |i\rangle$. An operation $\Lambda$ given by the Kraus operators $\{K_n\}$ is called free if it preserves the textureless state $f$, i.e.,
$\Lambda(f)=\sum_n K_n f K_n^\dagger=f$, where $\sum_n K_n^\dagger K_n=\mathbb{I}$ with $\mathbb{I}$ denoting the identity operator. Since $f$ is a pure state, it is an extreme point of the compact convex set $\mathscr{D}(\mathcal{H})$. Thus, a quantum operation  $\Lambda$ is free if and only if $K_n f k_n^\dagger=p_n f$ for all $n$, namely,
$K_n |f\rangle=\alpha_n |f\rangle$, where $p_n=\Tr(K_n f K_n^\dagger)$ and $\sum_n |\alpha_n|^2=1$  for $\alpha_n\in\mathbb{C}$.

Based on the definition of well-defined measures of QST \cite{parisio2024}, several QST measures have been proposed and investigated in Refs.~\cite{parisio2024,wang2025,zhang2025,muthuganesan2025}:

\textit{Trace-distance measure} \cite{wang2025},
\begin{equation}
\mathcal{T}_{\tr}(\rho)=\frac{1}{2}\norm{\rho-f}_1,
\end{equation}
where $\norm{X}_1=\Tr\sqrt{XX^\dagger}$ is the trace norm of matrix $X$.

\textit{State rugosity} \cite{parisio2024},
\begin{equation}
\mathfrak{R}(\rho)=-\ln\bra{f}\rho\ket{f}.
\end{equation}

\textit{Measure based on Jensen–Shannon divergence}~\cite{muthuganesan2025},
\begin{equation}\label{JSD}
\mathcal{T}_{JS}(\rho)=JS(f, \rho),
\end{equation}
where  $JS(\sigma,\rho)=\frac{1}{2}\left[S\left(\sigma\middle\|\frac{\sigma+\rho}{2}\right)+S\left(\rho\middle\|\frac{\sigma+\rho}{2}\right)\right]=S\left(\frac{\sigma+\rho}{2}\right)-\frac{1}{2}S(\sigma)-\frac{1}{2}S(\rho)$ is the Jensen–Shannon divergence between the states $\sigma$ and $\rho$.

In the texture resource theory, the maximal texture states $\{f_k\}$ are referred to as Fourier states \cite{parisio2024},
\begin{equation}\label{Fourier states}
\ket{f_k}=\frac{1}{\sqrt{d}}\sum_{j=1}^{d}\omega^{(k-1)(j-1)}\ket{j},
\end{equation}
where $\omega=\mathrm{e}^{\mathrm{i}\frac{2\pi}{d}}$, $k>1$. Any arbitrary state can be obtained from $\{f_k\}$ via a free operations.

\section{Quantum speed limits based on quantum-state texture}\label{sect3}

In this section, we systematically investigate the intrinsic relationships between the QST and QSLs. We will see that the minimal time required for the evolution of a quantum system is bounded by the absolute difference in QST measures between the initial and final states, which demonstrates that the texture of quantum states plays a non-negligible role in the dynamical evolution of physical systems.

The time evolution of a quantum system under a specified dynamical process is governed by the master equation \cite{sudarshan1961},
\begin{equation*}
\dot{\rho}_t:=\frac{\mathrm{d}\rho_t}{\mathrm{d}t}=\mathcal{L}_t(\rho_t),
\end{equation*}
where $\rho_t$ represents the instantaneous state at time $t$, while $\mathcal{L}_t$ refers to the Liouvillian superoperator \cite{rivas}. For closed quantum systems governed by a time‑independent Hamiltonian $H$, the state evolves unitarily as $\rho_t=\e^{-\mathrm{i}Ht}\rho_0\e^{\mathrm{i}Ht}$.

\subsection{QSL from the trace-distance of QST}
The trace distance of QST serves as an important quantity for characterizing the texture structure of a quantum state \cite{wang2025}. Here we investigate the influence of the trace distance of texture on the minimal evolution time during the dynamical evolution of physical systems, i.e., its constraint on the evolution speed of physical systems.

\begin{thm}\label{QSL trace-distance}
Given an arbitrary quantum dynamics governing the time evolution of the state of a finite-dimensional quantum system. Denote $\tau$ the time required for the system to evolve from the initial state $\rho_0$ at $t=0$ to the final state $\rho_{\tau}$ at $t=\tau$. The minimal time is lower bounded by
\begin{equation}\label{bound QSL trace-distance}
\tau\geqslant \tau_{\tr}^{\mathrm{QSL}}=\frac{\abs{\Delta_{\tr}}}{\Lambda_{\tr}(\tau)},
\end{equation}
where $\Delta_{\tr}=\mathcal{T}_{\tr}(\rho_{\tau})-\mathcal{T}_{\tr}(\rho_0), \Lambda_{\tr}(\tau)=\frac{1}{\tau}\int_0^\tau\norm{\mathcal{L}(\rho_t)}_1\,\d t$.
\end{thm}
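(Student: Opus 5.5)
The plan is to bound the change in the trace-distance texture by the length of the trajectory $t\mapsto\rho_t$ in trace norm. The key structural fact is that $\mathcal{T}_{\tr}$ is $1/2$-Lipschitz in trace norm, because $f$ is fixed. For any two states $\rho,\sigma$, the triangle inequality for $\norm{\cdot}_1$ gives
\begin{equation*}
\abs{\mathcal{T}_{\tr}(\rho)-\mathcal{T}_{\tr}(\sigma)}=\frac{1}{2}\abs{\norm{\rho-f}_1-\norm{\sigma-f}_1}\leqslant\frac{1}{2}\norm{\rho-\sigma}_1 .
\end{equation*}
Applying this with $\rho=\rho_\tau$ and $\sigma=\rho_0$ yields $\abs{\Delta_{\tr}}\leqslant\frac{1}{2}\norm{\rho_\tau-\rho_0}_1$.

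Next I would write $\rho_\tau-\rho_0=\int_0^\tau\dot\rho_t\,\d t=\int_0^\tau\mathcal{L}_t(\rho_t)\,\d t$, using the master equation. Assuming $t\mapsto\rho_t$ is continuously differentiable, which holds in finite dimension for the dynamics considered, the trace norm of a Bochner/Riemann integral is bounded by the integral of the norm. Hence
\begin{equation*}
\abs{\Delta_{\tr}}\leqslant\frac{1}{2}\int_0^\tau\norm{\mathcal{L}_t(\rho_t)}_1\,\d t\leqslant\int_0^\tau\norm{\mathcal{L}_t(\rho_t)}_1\,\d t=\tau\,\Lambda_{\tr}(\tau).
\end{equation*}
Dividing by $\Lambda_{\tr}(\tau)$, when it is nonzero, gives $\tau\geqslant\abs{\Delta_{\tr}}/\Lambda_{\tr}(\tau)$, as claimed. (In fact the factor $1/2$ gives a bound twice as strong, which I would note as a remark.) If $\Lambda_{\tr}(\tau)=0$, then $\mathcal{L}_t(\rho_t)=0$ almost everywhere, so $\rho_\tau=\rho_0$ and $\Delta_{\tr}=0$; in that case the statement is vacuous and should be interpreted accordingly.

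An alternative route, which matches the style of coherence-based QSL papers, is to differentiate $\mathcal{T}_{\tr}(\rho_t)$ directly. The difficulty is that the trace norm is not differentiable where $\rho_t-f$ has zero eigenvalues. To avoid this, I would use the integral/Lipschitz argument above rather than pointwise derivatives. If a derivative form is desired, one can use that $t\mapsto\mathcal{T}_{\tr}(\rho_t)$ is absolutely continuous, being Lipschitz composed with $C^1$. Its almost-everywhere derivative then satisfies $\abs{\tfrac{\d}{\d t}\mathcal{T}_{\tr}(\rho_t)}\leqslant\frac12\norm{\dot\rho_t}_1$, and integrating this gives the same result.

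The main obstacle is only this regularity point: justifying the exchange of norm and integral, or the a.e. differentiability. In finite dimension it is routine.
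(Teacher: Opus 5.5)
Your proof is correct, but it is organized differently from the paper's. The paper argues pointwise. It invokes $\frac{\d}{\d t}\norm{A(t)}\leqslant\norm{\dot A(t)}$ to get $\abs{\frac{\d}{\d t}\mathcal{T}_{\tr}(\rho_t)}\leqslant\norm{\dot\rho_t}_1$. It then integrates over $[0,\tau]$ and uses $\int_0^\tau\abs{g'}\,\d t\geqslant\abs{\int_0^\tau g'\,\d t}$. You instead apply the reverse triangle inequality once, at the endpoints, and then pull the norm inside $\rho_\tau-\rho_0=\int_0^\tau\dot\rho_t\,\d t$.

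\textbf{What your route buys.}
\begin{itemize}
\item It never needs $t\mapsto\norm{\rho_t-f}_1$ to be differentiable. This can fail where an eigenvalue of $\rho_t-f$ passes through zero. The paper passes over this, and your almost-everywhere remark is exactly what makes its derivative argument rigorous.
\item Your intermediate step $\abs{\Delta_{\tr}}\leqslant\frac12\norm{\rho_\tau-\rho_0}_1$ has a useful consequence. Even the improved bound $2\abs{\Delta_{\tr}}/\Lambda_{\tr}(\tau)$ never exceeds the plain state-distance bound $\norm{\rho_\tau-\rho_0}_1/\Lambda_{\tr}(\tau)$.
\end{itemize}

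\textbf{What the paper's route buys.}
\begin{itemize}
\item It bounds the total variation $\int_0^\tau\abs{\frac{\d}{\d t}\mathcal{T}_{\tr}(\rho_t)}\,\d t$, not just the net change.
\item It is the template reused for the rugosity and Jensen--Shannon bounds. There the speed carries state-dependent weights such as $1/\bra{f}\rho_t\ket{f}$ or $h_{JS}(f,\rho_t)$. No endpoint Lipschitz estimate with a fixed constant exists in those cases, so the pointwise version is genuinely needed.
\end{itemize}

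Your remark about the factor $\frac12$ is also right. The paper's first inequality discards it as well, since $\mathcal{T}_{\tr}(\rho_t)=\frac12\norm{\rho_t-f}_1$ gives $\abs{\frac{\d}{\d t}\mathcal{T}_{\tr}(\rho_t)}\leqslant\frac12\norm{\dot\rho_t}_1$.
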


\begin{proof}
Using the fact that $\frac{\d}{\d x}\norm{A(x)}\leqslant\norm{\frac{\d}{\d x}A(x)}$ for any matrix norm $\norm{\cdot}$ and  any matrix function $A(x)$, we have
\begin{equation*}
\frac{1}{\tau}\int_0^\tau\norm{\dot{\rho}_t}_1\d t
\geqslant\frac{1}{\tau}\int_0^\tau\abs{\frac{\d}{\d t}\mathcal{T}_{\tr}(\rho_t)}\,\d t
\geqslant\frac{1}{\tau}\abs{\int_0^\tau\frac{\d}{\d t}\mathcal{T}_{\tr}(\rho_t)\,\d t}
=\frac{1}{\tau}\Big|\mathcal{T}_{\tr}(\rho_\tau)-\mathcal{T}_{\tr}(\rho_0)\Big|,
\end{equation*}
which gives rise to the bound \eqref{bound QSL trace-distance}.
\end{proof}

There generally exists a discrepancy between the derived lower bound of the evolution time in Eq. \eqref{bound QSL trace-distance} and the actual evolution time of a physical system. To quantify the deviation of the true physical value from the theoretical prediction, we mathematically define a quantity termed the normalized relative error to characterize such error behavior in the following:
\begin{equation}\label{tightness_tr}
\tilde{\delta}_{\tr}(\tau) := \frac{\delta_{\tr}(\tau)-\min(\delta_{\tr})}{\max(\delta_{\tr})-\min(\delta_{\tr})},
\end{equation}
where $\delta_{\tr}(\tau)$ is the unnormalized relative error,
\begin{equation*}
\delta_{\tr}(\tau) := 1 - \frac{\tau_{\tr}^{\text{QSL}}}{\tau}.
\end{equation*}

Obvioulsy, The normalized relative error satisfies $0 \leqslant \tilde{\delta}_{tr}(\tau) \leqslant 1$. Overall, a smaller normalized relative error defined in Eq. \eqref{tightness_tr} corresponds to a tighter lower bound on the QST  time given in Eq. \eqref{bound QSL trace-distance}.

In order to elaborate the physical implication and theoretical applicability of Theorem \ref{QSL trace-distance}, we take dephasing dynamical model into consideration for concrete illustration.

\textit{Dephasing dynamics}. We investigate a two‑level atom coupled to a bosonic reservoir, whose total Hamiltonian is given by
\begin{equation}\label{Hamiltonian dephasing}
    H_{\text{tot}} = \frac{1}{2} \omega \sigma_z + \sum_j \omega_j b_j^\dagger b_j + \sum_j g_j \sigma_z b_j^\dagger + \text{H.c.},
\end{equation}
where $\omega$ represents the atomic transition frequency, $\omega_j$ is the frequency of the $j$-th mode's harmonic oscillator, $g_j$ denotes the corresponding atom-mode coupling strength, and $b_j$ is the annihilation operator of the $j$-th mode in the bosonic reservoir. In Schr\"{o}dinger representation, one has the master equation governing the evolution of the atomic system \cite{Breuer2007},
\begin{equation}\label{master equation dephasing}
\dot{\rho}_t = -\mathrm{i}[H, \rho_t] + \frac{\gamma_t}{2} \left( \sigma_z \rho_t \sigma_z - \rho_t \right),
\end{equation}
where $H=\frac{1}{2}\omega\sigma_z$ represents the free Hamiltonian of the atom, and $\gamma_t$ denotes the dephasing rate. The time-dependent dephasing rate $\gamma_t$ in Eq. \eqref{master equation dephasing}
accounts for the impacts of the reservoir’s spectral properties as well as the coupling constants $g_j$. Deriving the effective master equation from the total Hamiltonian \eqref{Hamiltonian dephasing} follows a standard procedure within open quantum systems \cite{Breuer2007}. In the Markovian limit, the time‑dependent decay rate $\gamma_t$, simplifies to a constant $\gamma$, whereas it exhibits an explicit temporal dependence in the non‑Markovian regime \cite{Breuer2007,Das2021}.

Let the initial state be $\rho_{0}=\ketbra{\psi(0)}$, where the initial pure state reads $\ket{\psi(0)}=\cos\theta\ket{0}+\sin\theta\ket{1}$. The corresponding density matrix of the initial state can be expressed as
\begin{equation}\label{initial state}
\rho_0=
\begin{pmatrix}
\cos^2\theta & \cos\theta\sin\theta \\
\cos\theta\sin\theta & \sin^2\theta
\end{pmatrix}.
\end{equation}

Solving Eq. \eqref{master equation dephasing} we obtain
\begin{equation}\label{state_t dephasing}
\rho_t=
\begin{pmatrix}
\cos^2\theta & \cos\theta\sin\theta\,\e^{-\int_0^t \d s\,\gamma_s-\mathrm{i}\omega t} \\
\cos\theta\sin\theta\,\e^{-\int_0^t \d s\,\gamma_s+\mathrm{i}\omega t} & \sin^2\theta
\end{pmatrix}.
\end{equation}
We have
\begin{gather*}
\mathcal{T}_{\tr}(\rho_0)=\sqrt{2(1-\sin(2\theta))}, \\
\begin{aligned}
\mathcal{T}_{\tr}(\rho_t)=&\e^{-\int_0^t\d s\,\gamma_s-\frac{1}{2}\mathrm{i}\omega t} \\&\times\sqrt{2\e^{2\int_0^t\d s\,\gamma_s+\mathrm{i}\omega t}-\sin(2\theta)\e^{\int_0^t\d s\,\gamma_s+2\mathrm{i}\omega t}-\sin(2\theta)\e^{\int_0^t\d s\,\gamma_s}+\sin^2(2\theta)\e^{\mathrm{i}\omega t}-\sin^2(2\theta)\e^{2\int_0^t\d s\,\gamma_s+\mathrm{i}\omega t}},
\end{aligned}\\
\norm{\mathcal{L}_t(\rho_t)}_1=\abs{\sin(2\theta)}\sqrt{\gamma_t^2+\omega^2}\e^{-\int_0^t\d s\,\gamma_s},
\end{gather*}

In particular, we choose $\omega=0$ and $\gamma_t=2$, which is time independent and allows us to focus on the effect of the initial state's texture. The bound in Eq. \eqref{bound QSL trace-distance} becomes
\begin{equation}\label{bound_tr dephasing}
\tau_{\tr}^{\text{QSL}}=\frac{\abs{\e^{-2\tau}\sqrt{2\e^{4\tau}-2\sin(2\theta)\e^{2\tau}+\sin^2(2\theta)-\sin^2(2\theta)\e^{4\tau}}-\sqrt{2(1-\sin(2\theta))}}}{\frac{1}{\tau}\abs{\sin(2\theta)}(1-\e^{-2\tau})}.
\end{equation}

By the symmetry of the sine function in Eq.~\eqref{bound_tr dephasing}, we only need to analyze the cases over the two intervals $(0,\frac{\pi}{2})$ and $(\frac{\pi}{2},\pi)$. The quantum speed limits and normalized errors for initial states located in $(0,\frac{\pi}{2})$ and $(\frac{\pi}{2},\pi)$ are plotted in FIG.~\ref{dephasing_tr}, respectively.
\begin{figure}
\centering
\includegraphics[width=0.5\textwidth]{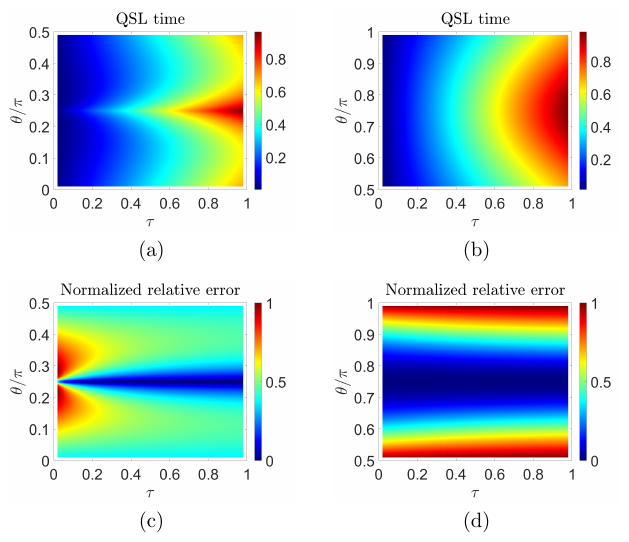}
\caption{The QSL time $\tau_{\tr}^{\text{QSL}}$ and normalized relative error $\tilde{\delta}_{\tr}(\tau)$ of dephasing dynamics with respect to different initial states.}
\label{dephasing_tr}
\end{figure}

We note that, for a fixed parameter $\theta$, $\tau_{\tr}^{\text{QSL}}$ monotonically increases as a function of the parameter $\gamma > 0$. FIG.~\ref{dephasing_tr} shows that the quantum speed limits and the normalized relative errors exhibit symmetry over the intervals $(0,\frac{\pi}{2})$ and $(\frac{\pi}{2},\pi)$, owing to that $\sin(2\left(\frac{\pi}{4}-\alpha\right))=\sin(2\left(\frac{\pi}{4}+\alpha\right))$ and $\sin(2\left(\frac{3\pi}{4}-\alpha\right))=\sin(2\left(\frac{3\pi}{4}+\alpha\right))$. In particular, FIG.~\ref{dephasing_tr}(a) presents that in the interval $(0,\frac{\pi}{2})$, the lower bound of the QSL gets larger when the textureless state $f$ in Eq.~(\ref{textureless state}) is adopted as the initial state, compared with other initial states; FIG.~\ref{dephasing_tr}(b) reveals that in the interval $(\frac{\pi}{2},\pi)$, when the maximal texture state
\begin{equation}\label{MTS}
\ket{g}=\frac{1}{\sqrt{2}}(\ket{0})-\ket{1})
\end{equation}
is chosen as the initial state, the lower bound of the QSL is larger than other initial states. FIG.~\ref{dephasing_tr}(c) shows that the normalized relative error $\tilde{\delta}_{\tr}(\tau)\approx 0$ when the parameter $\theta$ characterizing the initial state lies in the vicinity of $\frac{\pi}{2}$, which corresponds to the textureless state $f$ in Eq.~(\ref{textureless state}). This indicates that the lower bound of the QSL time exhibits a relatively high tightness. Similarly, FIG.~\ref{dephasing_tr}(d) show that the normalized relative error $\tilde{\delta}_{\tr}(\tau)\approx 0$ when the parameter $\theta$ lies in the vicinity of $\frac{3\pi}{4}$ which corresponds to the maximal texture state $\ket{g}$ in Eq.~(\ref{MTS}), i.e., $\frac{3\pi}{5}\lesssim\theta\lesssim\frac{9\pi}{10}$. This indicates that the lower bound of the QSL exhibits a relatively high tightness.

\subsection{QSL based on the state rugosity}

State rugosity is the first QST measure proposed by F. Parisio \cite{parisio2024}. He further adopted this quantity to demonstrate that QST can be utilized to characterize unknown quantum gates within universal circuit layers. Here, we employ state rugosity to characterize the constraint of QST on the minimal time required for the evolution of physical systems.

\begin{thm}\label{QSL rugosity}
For the dynamical evolution of a quantum system from the initial state $\rho_0$ to the final state $\rho_\tau$, the minimal time $\tau$ is bounded below by
\begin{equation}\label{bound QSL rugosity}
\tau\geqslant \tau_{\mathrm{ru}}^{\mathrm{QSL}}=\frac{\abs{\Delta_{\mathfrak{R}}}}{\Lambda_{\mathfrak{R}}(\tau)},
\end{equation}
where $\Delta_{\mathfrak{R}}=\mathfrak{R}(\rho_\tau)-\mathfrak{R}(\rho_0)$, $\Lambda_{\mathfrak{R}}(\tau)=\frac{1}{\tau}\int_0^\tau\frac{\bra{f}\,\abs{\mathcal{L}(\rho_t)}\,\ket{f}}{\bra{f}\rho_t\ket{f}}\,\d t$ and $\abs{\mathcal{L}(\rho_t)}$ denotes taking the absolute value of each entry in the matrix $\mathcal{L}(\rho_t)$, i.e., $(\abs{\mathcal{L}(\rho_t)})_{i,j}=\abs{(\mathcal{L}(\rho_t))_{i,j}}$.
\end{thm}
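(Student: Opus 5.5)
We follow the same template as the proof of Theorem~\ref{QSL trace-distance}: bound the instantaneous rate of change of $\mathfrak{R}(\rho_t)$ by the integrand of $\Lambda_{\mathfrak{R}}$, integrate over $[0,\tau]$, and use the triangle inequality for integrals. We assume throughout that $\bra{f}\rho_t\ket{f}>0$ on $[0,\tau]$; this is needed for $\mathfrak{R}$ to be finite. We also assume $\rho_t$ is differentiable, with $\dot\rho_t=\mathcal{L}(\rho_t)$.

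\textbf{Step 1 (differentiate).} Since $\mathfrak{R}(\rho_t)=-\ln\bra{f}\rho_t\ket{f}$, the chain rule gives
\begin{equation*}
\frac{\d}{\d t}\mathfrak{R}(\rho_t)=-\frac{\bra{f}\dot\rho_t\ket{f}}{\bra{f}\rho_t\ket{f}}=-\frac{\bra{f}\mathcal{L}(\rho_t)\ket{f}}{\bra{f}\rho_t\ket{f}}.
\end{equation*}
Because $\rho_t$ is Hermitian, so is $\mathcal{L}(\rho_t)$, and hence the numerator is real.

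\textbf{Step 2 (entrywise bound).} Expand in the reference basis. Since $\ket{f}$ has all amplitudes equal to $1/\sqrt{d}$, we get $\bra{f}X\ket{f}=\frac1d\sum_{i,j}X_{ij}$. The triangle inequality then gives
\begin{equation*}
\abs{\bra{f}\mathcal{L}(\rho_t)\ket{f}}\leqslant\frac1d\sum_{i,j}\abs{(\mathcal{L}(\rho_t))_{ij}}=\bra{f}\,\abs{\mathcal{L}(\rho_t)}\,\ket{f}.
\end{equation*}
Dividing by the positive quantity $\bra{f}\rho_t\ket{f}$ yields
\begin{equation*}
\abs{\tfrac{\d}{\d t}\mathfrak{R}(\rho_t)}\leqslant\frac{\bra{f}\abs{\mathcal{L}(\rho_t)}\ket{f}}{\bra{f}\rho_t\ket{f}}.
\end{equation*}

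\textbf{Step 3 (integrate).} Integrating Step 2 over $[0,\tau]$ and dividing by $\tau$ gives
\begin{equation*}
\Lambda_{\mathfrak{R}}(\tau)\geqslant\frac1\tau\int_0^\tau\abs{\tfrac{\d}{\d t}\mathfrak{R}(\rho_t)}\d t\geqslant\frac1\tau\abs{\int_0^\tau\tfrac{\d}{\d t}\mathfrak{R}(\rho_t)\d t}=\frac{\abs{\Delta_{\mathfrak{R}}}}{\tau}.
\end{equation*}
Rearranging gives \eqref{bound QSL rugosity}, provided $\Lambda_{\mathfrak{R}}(\tau)>0$. If $\Lambda_{\mathfrak{R}}(\tau)=0$, the chain above forces $\Delta_{\mathfrak{R}}=0$, and the bound is trivial.

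The only real subtlety is Step 2. One must see that the matrix $\abs{\mathcal{L}(\rho_t)}$, taken entrywise, is exactly what the triangle inequality produces, because $\ket{f}$ has uniform positive amplitudes. One must also justify dividing by $\bra{f}\rho_t\ket{f}$, which is where positivity of this quantity is used. Everything else is routine.
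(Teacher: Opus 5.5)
Your proposal is correct and matches the paper's proof: differentiate $-\ln\bra{f}\rho_t\ket{f}$, bound $\abs{\bra{f}\dot\rho_t\ket{f}}$ by $\bra{f}\abs{\dot\rho_t}\ket{f}$, integrate over $[0,\tau]$, and apply the triangle inequality for integrals. You go a bit further than the paper by justifying the entrywise bound through the uniform amplitudes of $\ket{f}$, stating the positivity assumption on $\bra{f}\rho_t\ket{f}$, and handling the degenerate case $\Lambda_{\mathfrak{R}}(\tau)=0$; these additions are all sound.
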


\begin{proof}
Since $\mathfrak{R}(\rho_t)=-\ln\bra{f}\rho_t\ket{f}$, one has
\begin{equation*}
\abs{\frac{\d}{\d t}\mathfrak{R}(\rho_t)}=\abs{-\frac{\frac{\d}{\d t}\bra{f}\rho_t\ket{f}}{\bra{f}\rho_t\ket{f}}}=\frac{\abs{\bra{f}\dot{\rho}_t\ket{f}}}{\bra{f}\rho_t\ket{f}}\leqslant\frac{\bra{f}\abs{\dot{\rho}_t}\ket{f}}{\bra{f}\rho_t\ket{f}}.
\end{equation*}
Therefore, by integrating over time $t$ from 0 to $\tau$, we have
\begin{equation*}
\frac{1}{\tau}\int_0^\tau\frac{\bra{f}\abs{\dot{\rho}_t}\ket{f}}{\bra{f}\rho_t\ket{f}}\,\d t
\geqslant \frac{1}{\tau}\int_0^\tau\abs{\frac{\d}{\d t}\mathfrak{R}(\rho_t)}\,\d t
\geqslant\frac{1}{\tau}\abs{\int_0^\tau\frac{\d}{\d t}\mathfrak{R}(\rho_t)\,\d t}
=\frac{1}{\tau}\Big|\mathfrak{R}(\rho_{\tau})-\mathfrak{R}(\rho_0)\Big|.
\end{equation*}
Hence, the conclusion holds.
\end{proof}

Similar to Eq.~\eqref{tightness_tr}, we define the normalized relative error with respect to the rugosity-based QSL time to characterize the tightness of the bound presented in Eq.~\eqref{bound QSL rugosity} as follows,
\begin{equation}\label{tightness_ru}
\tilde{\delta}_{\text{ru}}(\tau) := \frac{\delta_{\text{ru}}(\tau)-\min(\delta_{\text{ru}})}{\max(\delta_{\text{ru}})-\min(\delta_{\text{ru}})},
\end{equation}
where $\delta_{\text{ru}}(\tau)$ is the unnormalized relative error,
\begin{equation*}
\delta_{\text{ru}}(\tau) := 1 - \frac{\tau_{\text{ru}}^{\text{QSL}}}{\tau}.
\end{equation*}

Let us consider the following example to illustrate the applications of Theorem~\ref{QSL rugosity}.

\textit{Dissipative dynamics}. Suppose a two-level atom interacts with a leaky single-mode cavity \cite{Breuer2007,GarrawayPRA2290,DeffnerPRL010402}. The Hamiltonian of this system reads
\begin{equation}\label{Hamiltonian dissipative}
H_{\text{tot}}=\frac{1}{2}\omega\sigma_z+\sum_j\omega_jb_j^\dagger b_j+\sum_j g_j\sigma_+b_j+\text{H.c.},
\end{equation}
where these notations are defined analogous to those in Eq.~(\ref{Hamiltonian dephasing}). We assume the cavity is initially prepared in the vacuum state and functions as a memoryless Markovian reservoir. Under the Born-Markov approximation and rotating‑wave approximation, we trace out the cavity mode degrees of freedom to obtain the master equation \cite{LindabladCMP119,Lidar00967}
\begin{equation}\label{master equation dissipative}
\dot\rho_t=\gamma_t\left(\sigma_-\rho_t\sigma_+-\frac{1}{2}
\left\{\sigma_+\sigma_-,\rho_t\right\}\right),
\end{equation}
where $\gamma_t$ denotes the time-dependent decay rate. The operators $\sigma_{+}=\ketbra{0}{1}$ and $\sigma_{-}=\ketbra{1}{0}$ are the atomic raising and lowering operators, respectively.

We still consider the initial state of the system to be $\rho_0$ given by Eq. \eqref{initial state}. Solving Eq. \eqref{master equation dissipative} one derives
\begin{equation}\label{state_t dissipative}
\rho_t=
\begin{pmatrix}
\cos^2\theta\,\e^{-\int_0^t \d s\,\gamma_s} & \cos\theta\sin\theta\,\e^{-\int_0^t \d s\,\frac{\gamma_s}{2}} \\
\cos\theta\sin\theta\,\e^{-\int_0^t \d s\,\frac{\gamma_s}{2}} & 1-\cos^2\theta\,\e^{-\int_0^t \d s\,\gamma_s}
\end{pmatrix}.
\end{equation}

The following quantities can be derived through direct calculation.
\begin{gather*}
\mathfrak{R}(\rho_0)=-\ln\frac{1+\sin2\theta}{2},\\
\mathfrak{R}(\rho_t)=-\ln\frac{1+\sin(2\theta)\e^{-\int_0^t\d s\,\frac{1}{2}\gamma_s}}{2},\\
\frac{\bra{f}\abs{\mathcal{L}(\rho_t)}\ket{f}}{\bra{f}\rho_t\ket{f}}=\frac{2\gamma_t\cos^2\theta\,\e^{-\int_0^t\d s\,\gamma_s}+\frac{1}{2}\gamma_t\abs{\sin(2\theta)}\e^{-\int_0^t\d s\,\frac{1}{2}\gamma_s}}{1+\sin(2\theta)\e^{-\int_0^t\d s\,\frac{1}{2}\gamma_s}}.
\end{gather*}
For the case of $\gamma_t=2$, the bound \eqref{bound QSL rugosity} becomes
\begin{equation}\label{bound_ru dissipative}
\begin{split}
\tau_{\text{ru}}^{\text{QSL}} &= \frac{\abs{\ln(1+\sin(2\theta)\e^{-\tau}) - \ln(1+\sin(2\theta))}}{\frac{1}{\tau}\int_0^\tau \d t \, \frac{4\cos^2\theta\e^{-2t}+\abs{\sin(2\theta)}\e^{-t}}{1+\sin(2\theta)\e^{-t}}}\\
&=\begin{cases}
\dfrac{\ln\frac{1+\sin(2\theta)}{1+\sin(2\theta)\e^{-2\tau}}}{2\cot\theta(1-\e^{-\tau})-\cot^2\theta\ln\frac{1+\sin(2\theta)}{1+\sin(2\theta)\e^{-2\tau}}}\tau,~& \sin(2\theta)>0,\\
\dfrac{\ln\frac{1+\sin(2\theta)\e^{-2\tau}}{1+\sin(2\theta)}}{2\cot\theta(1-\e^{-\tau})+(1+\csc^2\theta)\theta\ln\frac{1+\sin(2\theta)\e^{-2\tau}}{1+\sin(2\theta)}}\tau,~& \sin(2\theta)<0.
\end{cases}
\end{split}
\end{equation}

Owing to the symmetry of the sine function, it suffices to confine our analysis to the two intervals $(0,\frac{\pi}{2})$ and $(\frac{\pi}{2},\pi)$ of $\theta$. The quantum speed limits and the normalized relative errors for initial states in $(0,\pi/2)$ and $(\pi/2,\pi)$ of $\theta$ are displayed in FIG.~\ref{dissipative_ru}.
FIG.~\ref{dissipative_ru}(a) (resp. FIG.~\ref{dissipative_ru}(c)) can be roughly mapped to FIG.~\ref{dissipative_ru}(b) (resp. FIG.~\ref{dissipative_ru}(d)) under the transformation $\theta\rightarrow\pi-\theta$ and vice versa. Specifically, Figures~\ref{dissipative_ru}(a) and~\ref{dissipative_ru}(b) show that $\tau_{\text{ru}}^{\text{QSL}}\approx 0$ for all $\tau>0$ whenever  $0<\theta\lesssim \pi/5$ or $4\pi/5\lesssim\theta<\pi$, with $\tilde{\delta}_{\text{ru}}(\tau)\approx 1$ as illustrated in Figures~\ref{dissipative_ru}(c) and~\ref{dissipative_ru}(d). Consequently, as $\tau$ becomes large, the lower bound on the actual evolution time becomes increasingly loose, reducing to the trivial condition $\tau \gtrsim 0$. For a fixed $\theta$ satisfying $\pi/5\lesssim\theta<\pi/2$ or $\pi/2<\theta\lesssim 4\pi/5$, $\tau_{\text{ru}}^{\text{QSL}}$ varies monotonically as a function of the evolution time  $\tau>0$. In addition, Figures~\ref{dissipative_ru}(c) and~\ref{dissipative_ru}(d) imply that, for each initial-state parameter $\theta$, the normalized relative error $\tilde{\delta}_{\text{ru}}(\tau)$ remains approximately constant throughout the dynamical evolution. This indicates that the tightness of the bound in Eq.~\eqref{bound_ru dissipative} is preserved.

Here we observe that, as the real evolution time $\tau$ varies, the dissipative system exhibits fundamentally tighter bounds when the initial state is taken to be the maximum texture state of Eq.~(\ref{MTS}), in comparison with other initial states. This can be traced to the fact that Parisio, in Ref. \cite{parisio2024}, introduced state rugosity as a figure of merit for QST and proved that the Fourier states in Eq. ~(\ref{Fourier states}) possess maximal state rugosity. For a system with dimension $d=2$, there exists only one Fourier state, which corresponds to the input state with $\theta=\pi/2$. Together with the definition of the normalized relative error in Eq.~\ref{tightness_ru}, this yields a tighter lower bound on the evolution time.

\begin{figure}
\centering 
\includegraphics[width=0.5\textwidth]{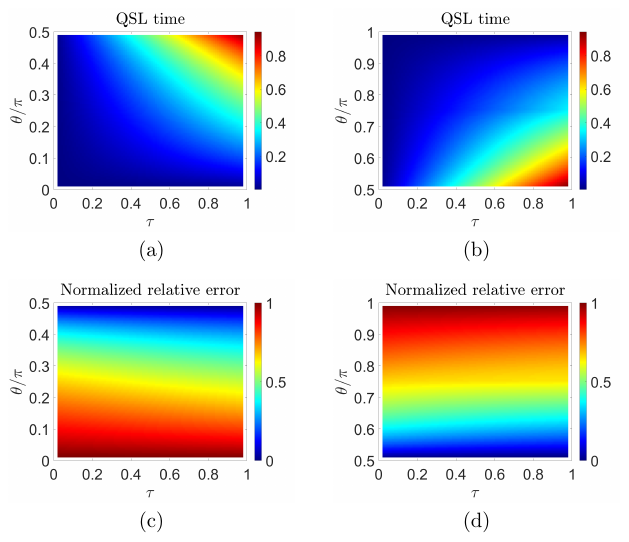}
\caption{The QSL time $\tau_{\text{ru}}^{\text{QSL}}$ and normalized relative error $\tilde{\delta}_{\text{ru}}(\tau)$ of dissipative dynamics corresponding to different initial states.}
\label{dissipative_ru}
\end{figure}

\subsection{QSL derived from texture Measure via Jensen-Shannon divergence}

In quantum information theory, the Jensen-Shannon divergence serves as an important quantity. For instance, it can be used to construct measures of quantum imaginarity \cite{tianPLA130479} and to study imaginarity measures for quantum channels \cite{fuQIP25140}. Here, we employ the QST measure defined by R. Muthuganesan based on the Jensen-Shannon divergence \cite{muthuganesan2025} to further explore the role of QST in the minimal evolution time required for the dynamical evolution of physical systems. As an application, we calculate the lower bound on the QST time for a system evolving under the amplitude damping channel.

We consider a general time-dependent nonunitary quantum evolution governed by $\rho_t = \mathcal{E}_t(\rho_0)$ for $t \in [0, \tau]$. The absolute value of the time derivative of quantum Jensen-Shannon divergence goes as follows,
\begin{align}
\abs{\frac{\d}{\d t}JS(f,\rho_t)}&\leqslant\abs{\frac{\d}{\d t}S(\omega_t)}+\frac{1}{2}\abs{\frac{\d}{\d t}S(\rho_t)} \notag \\
&=\frac{1}{2}\abs{\Tr\left(\frac{\d\rho_t}{\d t}\ln\omega_t\right)}+\frac{1}{2}\abs{\Tr\left(\frac{\d\rho_t}{\d t}\ln\rho_t\right)} \notag\\
&\leqslant\frac{1}{2}(\norm{\ln\omega_t}_\infty+\norm{\ln\rho_t}_\infty)\norm{\frac{\d\rho_t}{\d t}}_1 \notag \\
&=\frac{1}{2}\abs{\ln\left[\lambda_{\min}(\rho_t)\lambda_{\min}\left(\frac{f+\rho_t}{2}\right)\right]}\norm{\frac{\d\rho_t}{\d t}}_1, \label{derivativeJSD1}
\end{align}
where we have used the norm inequality $|\Tr(AB)|\leqslant\norm{A}_1\norm{B}_\infty$ and the equality  $\frac{\d}{\d t}S(\rho_t)=-\Tr(\frac{\d\rho_t}{\d t}\ln\rho_t)$.

Integrating Eq. \eqref{derivativeJSD1} over the time interval $t\in[0,\tau]$ yields an upper bound on the Jensen-Shannon divergence between the initial state $\rho_{0}$ and final state $\rho_{\tau}$
\begin{equation}\label{derivativeJSD2}
|JS(f,\rho_\tau)-JS(f,\rho_0)|\leqslant\int_0^\tau\d t\,h_{JS}(f,\rho_t)\norm{\frac{\d\rho_t}{\d t}}_1,
\end{equation}
where $h_{JS}(f,\rho_t):=\frac{1}{2}\abs{\ln[\lambda_{\min}(\rho_t)\lambda_{\min}(\omega_t)]}$ is an auxiliary function with $\omega_t=\frac{f+\rho_t}{2}$.

Given the explicit evolution form of a physical system $\rho_t = \mathcal{E}_t(\rho_0)$, the term $\left\|\frac{\d \rho_t}{\d t}\right\|_1$ on the right-hand side of Eq. \eqref{derivativeJSD2} can be directly computed, which quantifies the evolution speed of the system. Moreover, the weight function $h_{JS}(f,\rho_t)$ depends exclusively on $\lambda_{\min}(\rho_t)$ and $\lambda_{\min}(\omega_t)$, rendering it computationally tractable in practical implementations. It is worth noting that the left‑hand side of Eq. \eqref{derivativeJSD2} characterizes the distinguishability between the initial state $\rho_0$ and the final state $\rho_{\tau}$ through the absolute difference of their JSD‑based texture measures. This quantity is further upper‑bounded by the time integral of the product between $\left\|\frac{\d\rho_t}{\d t}\right\|_1$ and the weight function $h_{JS}(f,\rho_t)$. From this inequality, a quantitative relationship can be identified between the texture of instantaneous states and the evolution‑characterizing quantity $\left\|\frac{\d\rho_t}{\d t}\right\|_1$ during the system dynamics. This partially establishes a fundamental connection between the two quantities, implying that the texture resource inherent in quantum states imposes constraints on the minimal time required for physical system evolution.

To further evaluate the tightness of the bound on the JSD‑based measure derived in Eq.~(\ref{derivativeJSD2}), we define the following normalized relative error to quantify the tightness of the bound on JSD-based measure,
\begin{equation}\label{tightness}
\tilde{\delta}_{JS}(\tau) := \frac{\delta_{JS}(\tau)-\min(\delta_{JS})}{\max(\delta_{JS})-\min(\delta_{JS})},
\end{equation}
where $\delta_{JS}(\tau)$ denotes the unnormalized relative error,
\begin{equation*}
\delta_{JS}(\tau) := 1-\frac{\left|JS(f,\rho_\tau)-JS(f,\rho_0)\right|}{\Gamma},
\end{equation*}
with
\begin{equation*}
\Gamma = \int_0^\tau\d t\,h_{JS}(f,\rho_t)\norm{\frac{\d\rho_t}{\d t}}_1.
\end{equation*}
Here, $\Gamma$ is given by the function $h_{JS}(f,\rho_t)$, which depends on the smallest eigenvalue  
$\lambda_{\text{min}}(\rho_t)$ of the instantaneous state $\rho_t$, and the smallest eigenvalue  $\lambda_{\min}(\omega_t)$ of the convex combination $\omega_t$ of the textureless state $f$ and the instantaneous state $\rho_t$, as well as the quantity $\rho_t$.

It is straightforward to see that the normalized relative error defined in Eq.~(\ref{tightness}) mathematically captures the tightness of the inequality given in Eq.~(\ref{derivativeJSD2}). As will be shown later, this quantity also quantifies the discrepancy between the time lower bound for system evolution derived from the JSD‑based texture measure and the actual evolution time of the physical system. Subsequently, starting from Eq.~(\ref{derivativeJSD2}), we derive a family of QSLs linked to the JSD‑based texture for general non‑unitary dynamics. The following conclusion follows directly upon integrating both sides of Eq.~(\ref{derivativeJSD2}) over the interval $[0, \tau]$ and subsequently dividing by  $\tau$.

\begin{thm}\label{JSD QSL}
The evolution duration $\tau$ admits the following lower bound under a general time-dependent non-unitary evolution from an initial state $\rho_0$ to a final state $\rho_\tau$ within the time interval $t \in [0, \tau]$,
\begin{equation}\label{JSD bound QSL}
\tau \geqslant \tau_{JS}^{\mathrm{QSL}}\equiv\frac{\left|JS(f,\rho_\tau)-JS(f,\rho_0)\right|}{\left\langle\!\left\langle h_{JS}(f,\rho_t)\norm{\d\rho_t/\d t}_1 \right\rangle\!\right\rangle_\tau}
\end{equation}
with $\langle\!\langle \cdot \rangle\!\rangle_\tau=\frac{1}{\tau}\int_0^\tau\d t\,\cdot$.
\end{thm}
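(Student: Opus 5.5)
The bound follows directly from the integrated inequality \eqref{derivativeJSD2}, so the plan is to re-establish that inequality carefully and then rearrange. The argument runs along the same lines as the proofs of Theorems~\ref{QSL trace-distance} and \ref{QSL rugosity}. We bound the absolute value of $\frac{\d}{\d t}JS(f,\rho_t)$ pointwise in $t$, integrate over $[0,\tau]$, and use $\big|\int_0^\tau \frac{\d}{\d t}JS(f,\rho_t)\,\d t\big| \leqslant \int_0^\tau \big|\frac{\d}{\d t}JS(f,\rho_t)\big|\,\d t$ together with the fundamental theorem of calculus.

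\textbf{Step 1: the derivative of the divergence.} Write $JS(f,\rho_t)=S(\omega_t)-\frac12 S(f)-\frac12 S(\rho_t)$ with $\omega_t=\frac{f+\rho_t}{2}$. Since $f$ is pure and time independent, $S(f)=0$ drops out. Hence $\frac{\d}{\d t}JS(f,\rho_t)=\frac{\d}{\d t}S(\omega_t)-\frac12\frac{\d}{\d t}S(\rho_t)$.

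\textbf{Step 2: trace-form identities.} Use $\frac{\d}{\d t}S(\sigma_t)=-\Tr(\dot\sigma_t\ln\sigma_t)$, which holds because $\Tr\dot\sigma_t=0$. Note also $\dot\omega_t=\frac12\dot\rho_t$. Together these give $\big|\frac{\d}{\d t}JS\big|\leqslant \frac12|\Tr(\dot\rho_t\ln\omega_t)|+\frac12|\Tr(\dot\rho_t\ln\rho_t)|$.

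\textbf{Step 3: Hölder bound.} Apply $|\Tr(AB)|\leqslant\|A\|_1\|B\|_\infty$ to each term. For a density matrix $\sigma$, $\|\ln\sigma\|_\infty=|\ln\lambda_{\min}(\sigma)|$, since all its eigenvalues lie in $(0,1]$. Adding the two terms, whose logarithms share a sign, gives $\frac12|\ln[\lambda_{\min}(\rho_t)\lambda_{\min}(\omega_t)]|\,\|\dot\rho_t\|_1=h_{JS}(f,\rho_t)\|\dot\rho_t\|_1$. This reproduces \eqref{derivativeJSD1}.

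\textbf{Step 4: rearrangement.} Integrate from $0$ to $\tau$ to obtain \eqref{derivativeJSD2}. Then write its right-hand side as $\tau\,\langle\!\langle h_{JS}\|\dot\rho_t\|_1\rangle\!\rangle_\tau$ and divide by the time average; this yields \eqref{JSD bound QSL}. No further averaging is needed beyond this definition of $\langle\!\langle\cdot\rangle\!\rangle_\tau$.

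\textbf{Main obstacle: regularity.} The delicate point is that $\ln\rho_t$ and the entropy derivative formula require $\rho_t$ to be full rank, i.e.\ $\lambda_{\min}(\rho_t)>0$. Otherwise $h_{JS}=\infty$, and the bound becomes trivially true, since then $\tau_{JS}^{\mathrm{QSL}}=0$.

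I would handle this by first assuming $\rho_t$ is faithful and differentiable on $(0,\tau)$. Note that $\omega_t\geqslant\frac12\rho_t$ is then faithful too. For the general case, I would argue by a limiting perturbation $\rho_t\to(1-\epsilon)\rho_t+\epsilon\,\mathbb I/d$, or simply accept the infinite weight as giving the trivial bound. I would also remark that the denominator must be nonzero; when it vanishes, $\rho_t$ is constant and both sides are zero.
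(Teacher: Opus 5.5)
Your proposal is correct and follows essentially the same route as the paper. You bound the derivative pointwise via $\frac{\d}{\d t}S(\sigma_t)=-\Tr(\dot\sigma_t\ln\sigma_t)$, $\dot\omega_t=\frac12\dot\rho_t$ and Hölder's inequality to get \eqref{derivativeJSD1}, integrate to \eqref{derivativeJSD2}, and rewrite the integral as $\tau\langle\!\langle h_{JS}\|\dot\rho_t\|_1\rangle\!\rangle_\tau$; your extra care about faithfulness of $\rho_t$ (working on the open interval or perturbing to $(1-\epsilon)\rho_t+\epsilon\,\mathbb{I}/d$) fills a regularity point the paper passes over.
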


Therefore, the unnormalized relative error $\delta_{JS}(\tau)$ in Eq.~(\ref{tightness}) reads
$
\delta_{JS}(\tau) := 1-\frac{\tau_{JS}^{\text{QSL}}}{\tau}
$, which characterizes the tightness of the time lower bound given in Eq.~(\ref{JSD bound QSL}).

We start by considering the dynamical evolution governed by completely positive and trace-preserving (CPTP) quantum operations described by a time-dependent Kraus representation,
$\mathcal{E}_t(\rho)=\sum_\ell V_\ell(t) \rho V_\ell^\dagger(t)$ with $\sum_\ell V_\ell^\dagger(t) V_\ell(t)=\mathbb{I}$.
The state at time $t$ is $\rho_t=\mathcal{E}_t(\rho_0)$.
According to Ref. \cite{xuane00383}, the Schatten speed is upper bounded by $\| \d \rho_t / \d t \|_1 \leqslant 2\sum_\ell \| V_\ell \, \rho_0 (\d V_\ell^\dagger / \d t) \|_1$. Substituting this bound into the bound in Eq. (\ref{derivativeJSD2}), we obtain
\begin{equation}\label{channel1}
|JS(f,\rho_\tau)-JS(f,\rho_0)|\leqslant 2\sum_\ell\int_0^\tau\d t\,h_{JS}(f,\rho_t)\left\|V_\ell\,\rho_0\left(\frac{\d V_\ell^\dagger}{\d t}\right)\right\|_1.
\end{equation}
Based on Eq. (\ref{channel1}), the relative error can be expressed as
\begin{equation}\label{channel2}
\delta_{JS}(\tau)=1-\frac{|JS(f,\rho_\tau)-JS(f,\rho_0)|}{2\sum_\ell\int_0^\tau\d t\,h_{JS}(f,\rho_t)\|V_\ell\,\rho_0(\d V_\ell^\dagger/\d t)\|_1}.
\end{equation}

Integrating both sides of Eq. (\ref{channel1}) over the interval $[0,\tau]$ and then dividing by $\tau$, we get the QSL time formulated in Eq. (\ref{channel3}) for CPTP dynamics, as below,
\begin{equation}\label{channel3}
\tau\geqslant\tau_{JS}^{\mathrm{QSL}}=\frac{\left|JS(f,\rho_\tau)-JS(f,\rho_0)\right|}{2\sum_\ell\left\langle\!\!\left\langle h_{JS}(f,\rho_t)\|V_\ell\,\rho_0(\d V_\ell^\dagger/\d t)\|_1\right\rangle\!\!\right\rangle}.
\end{equation}

Eq.~\eqref{channel3} establishes a fundamental lower bound on the evolution time under the framework of CPTP dynamical maps. This bound is expressed in terms of the absolute change in the JSD-based texture measure, and explicitly involves the time-dependent Kraus operators $\{V_\ell\}$ that characterize the dynamics, the minimal eigenvalue $\lambda_{\min}(\rho_t)$ of the instantaneous state, as well as the minimal eigenvalue $\lambda_{\min}(\omega_t)$ of the convex combination between the textureless state $f$ and the instantaneous state $\rho_t$. In the subsequent analysis, we concentrate on evaluating the QSL bound given by Eq.~\eqref{channel3} and the relative error defined in Eq.\eqref{channel2} for single-qubit systems.

\textit{Amplitude damping channel}. We consider a quantum system undergoing  the amplitude damping channel with Kraus operators, $V_1 = |0\rangle\langle0| + \sqrt{1-\mu_t}\, |1\rangle\langle1|$, $V_2 = \sqrt{\mu_t}|0\rangle\langle1|$, $\mu_t=1-\e^{-\gamma t}$. The system is initially prepared in a general single-qubit state as $
\rho_0 = \frac{\mathbb{I} + \mathbf{r}_0 \cdot \boldsymbol{\sigma}}{2}$, where the Bloch vector $\mathbf{r}_0=\{r_0\sin\theta\cos\phi,r_0\sin\theta\sin\phi,r_0\cos\theta\}$ with $r_0\in [0,1]$, $\theta\in [0,\pi]$ and $\phi\in[0,2\pi]$, and $\boldsymbol{\sigma}=\{\sigma_x,\sigma_y,\sigma_z\}$ is given by the standard Pauli matrices. The Bloch representation of the textureless state $f$ is $f=\frac{\mathbb{I}+\mathbf{r}_f\cdot\boldsymbol{\sigma}}{2}$, where $\mathbf{r}_f=(1,0,0)$.

The evolved state becomes
\begin{equation*}
\begin{aligned}
\rho_t^{AD}&=\frac{1}{2}
\begin{pmatrix}
1+[\mu_t+(1-\mu_t)r_0\cos\theta] & \sqrt{1-\mu_t}(r_0\sin\theta\cos\phi-\mathrm{i}r_0\sin\theta\sin\phi) \\
 \sqrt{1-\mu_t}(r_0\sin\theta\cos\phi+\mathrm{i}r_0\sin\theta\sin\phi) & 1-[\mu_t+(1-\mu_t)r_0\cos\theta]
\end{pmatrix} \\
&=\frac{\mathbb{I}+\mathbf{r}_t^{AD}\cdot\boldsymbol{\sigma}}{2},
\end{aligned}
\end{equation*}
with $\mathbf{r}_t^{AD}=(\sqrt{1-\mu_t}r_0\sin\theta\cos\phi,\sqrt{1-\mu_t}r_0\sin\theta\sin\phi,\mu_t+(1-\mu_t)r_0\cos\theta)$. The instantaneous state has eigenvalues $\lambda_{\max/\min}(\rho_t^{AD})=(1\pm r_t^{AD})/2$, with
$
r_t^{AD}=\sqrt{(1-\mu_t)r_0^2\sin^2\theta+[\mu_t+(1-\mu_t)r_0\cos\theta]^2}.
$

For the convex combination $\omega_t^{AD}=(f+\rho_t^{AD})/2=(\mathbb{I}+\mathbf{s}_t^{AD}\cdot\boldsymbol{\sigma})/2$, with
\begin{equation*}
\mathbf{s}_t^{AD}=\frac{\mathbf{r}_f+\mathbf{r}_t^{AD}}{2}
=\frac{1}{2}\Big(1+\sqrt{1-\mu_t}r_0\sin\theta\cos\phi,\sqrt{1-\mu_t}r_0\sin\theta\sin\phi,\mu_t+(1-\mu_t)r_0\cos\theta\Big),
\end{equation*}
the eigenvalues are $\lambda_{\max/\min}(\omega_t)=(1\pm s_t^{AD})/2$, where
\begin{equation*}
s_t^{AD}=\frac{1}{2}\sqrt{1+2\sqrt{1-\mu_t}r_0\sin\theta\cos\phi+(1-\mu_t)r_0^2\sin^2\theta+[\mu_t+(1-\mu_t)r_0\cos\theta]^2}.
\end{equation*}
And when $t=0$, $\lambda_{\max/\min}(\omega_0)=(1\pm s_0)/2$, with
$
s_0=\frac{1}{2}\sqrt{1+2r_0\sin\theta\cos\phi+r_0^2}.
$

Therefore, we obtain the JSD-based measures for initial state $\rho_0$ and final state $\rho_\tau$, the auxiliary function $h_{JS}(f,\rho_t)$, and the Schatten speed $\sum_\ell\|V_\ell\,\rho_0(\d V_\ell^\dagger/\d t)\|_1$ as follows,
\begin{gather}
JS(f,\rho_0)=\frac{1-r_0}{4}\ln\frac{1-r_0}{2}+\frac{1+r_0}{4}\ln\frac{1+r_0}{2}-\frac{1-s_0}{2}\ln\frac{1-s_0}{2}-\frac{1+s_0}{2}\ln\frac{1+s_0}{2}, \label{ADquantity1} \\
JS(f,\rho_\tau^{AD})=\frac{1-r_\tau^{AD}}{4}\ln\frac{1-r_\tau^{AD}}{2}+\frac{1+r_\tau^{AD}}{4}\ln\frac{1+r_\tau^{AD}}{2}-\frac{1-s_\tau^{AD}}{2}\ln\frac{1-s_\tau^{AD}}{2}-\frac{1+s_\tau^{AD}}{2}\ln\frac{1+s_\tau^{AD}}{2}, \label{ADquantity2} \\
h_{JS}(f,\rho_t^{AD})=\frac{1}{2}\abs{\ln(\frac{1-r_t^{AD}}{2}\cdot\frac{1-s_t^{AD}}{2})}, \label{ADquantity3} \\
\sum_\ell\|V_\ell\,\rho_0(\d V_\ell^\dagger/\d t)\|_1=\frac{1}{4}\gamma\e^{-\gamma t}\Big(1-r_0\cos\theta+\sqrt{(1-r_0\cos\theta)^2+\e^{\gamma t}r_0^2\sin^2\theta}\,\Big) \label{ADquantity4}.
\end{gather}

Substituting Eqs.~(\ref{ADquantity1}-\ref{ADquantity4}) into Eq.~\eqref{channel3} and Eq.~(\ref{tightness}), we derive the lower bound on QSL time for amplitude damping channel. We present the numerical results for the QSL time and the normalized relative error corresponding to different initial states with decay rate $\gamma=0.8$ in FIG.~\ref{figureADchannel} and FIG.~\ref{errorfigureADchannel}.
\begin{figure}
\centering
\includegraphics{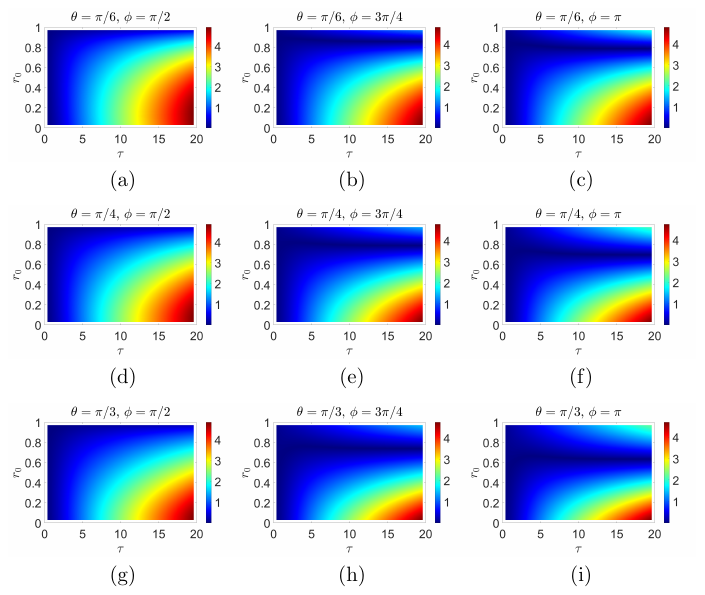}
\caption{Density plots of the texture measure-based QSL time $\tau_{JS}^{\text{QSL}}$ as a function of the parameters $\tau$ and $r\in[0,1)$,  with $\theta\in\{\pi/6,\pi/4,\pi/3\}$ and $\phi\in\{\pi/2,3\pi/4,\pi\}$ for initial single-qubit states subject to the amplitude damping channel.}
\label{figureADchannel}
\end{figure}

\begin{figure}
\centering
\includegraphics{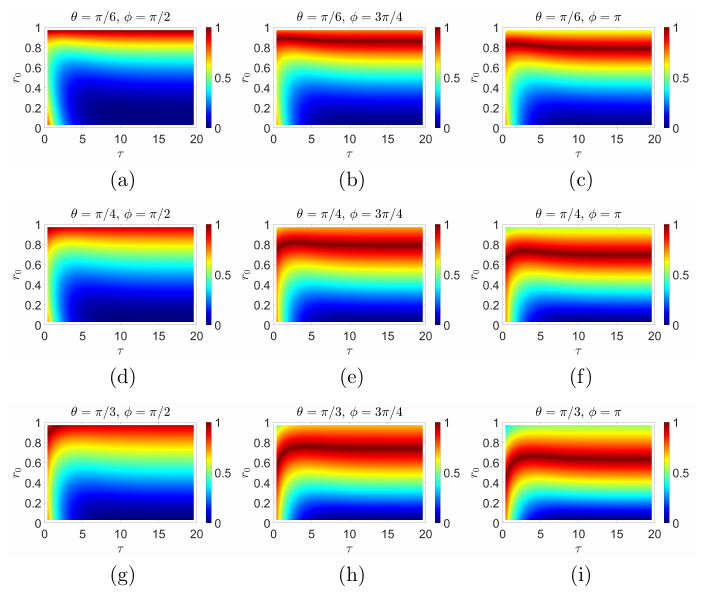}
\caption{Density plots of the normalized relative error $\tilde{\delta}_{JS}(\tau)$ as a function of the parameters $\tau$ and $r\in[0,1)$, with $\theta\in\{\pi/6,\pi/4,\pi/3\}$ and $\phi\in\{\pi/2,3\pi/4,\pi\}$ for initial single-qubit states subject to the amplitude damping channel.}
\label{errorfigureADchannel}
\end{figure}

We note in FIG.~\ref{figureADchannel} that, for a fixed $\phi\in\{\pi/2,3\pi/4,\pi\}$, $\tau_{JS}^{\text{QSL}}$ exhibits the similar qualitative behavior for different $\theta\in\{\pi/6,\pi/4,\pi/3\}$, despite slight quantitative differences among these cases, while for a fixed value $r_0 \in [0,1)$ of the mixedness parameter, $\tau_{JS}^{\text{QSL}}$ varies monotonically as a function of the parameter $\tau > 0$. In particular, for $\phi=3\pi/4$ and $\theta\in\{\pi/6,\pi/4,\pi/3\}$, Figures~\ref{figureADchannel}(b),~\ref{figureADchannel}(e) and~\ref{figureADchannel}(h) show that $\tau_{JS}^{\text{QSL}}\approx 0$ for all $\tau>0$ when $0.6 \lesssim r_0 < 1$. The analogous phenomenon also holds for $\phi=\pi$ and $\theta\in\{\pi/6,\pi/4,\pi/3\}$, which is evidenced by the results presented in FIG.~\ref{figureADchannel}(c),~\ref{figureADchannel}(f) and~\ref{figureADchannel}(i).

In FIG.~\ref{errorfigureADchannel}, we note that, for a fixed mixedness parameter $0 < r_0 \lesssim 0.6$ in FIG.~\ref{errorfigureADchannel}(a), $0 < r_0 \lesssim 0.5$ in FIG.~\ref{errorfigureADchannel}(d), $0 < r_0 \lesssim 0.4$ in FIG.~\ref{errorfigureADchannel}(g), $0 < r_0 \lesssim 0.4$ in FIG.~\ref{errorfigureADchannel}(b), $0 < r_0 \lesssim 0.3$ in FIG.~\ref{errorfigureADchannel}(e), $0 < r_0 \lesssim 0.2$ in FIG.~\ref{errorfigureADchannel}(h), $0 < r_0 \lesssim 0.3$ in FIG.~\ref{errorfigureADchannel}(c), $0 < r_0 \lesssim 0.2$ in FIG.~\ref{errorfigureADchannel}(f), $0 < r_0 \lesssim 0.2$ in FIG.~\ref{errorfigureADchannel}(i), respectively, $0 \lesssim \tilde{\delta}_{JS}({\tau})\lesssim 0.2$ for $\tau\gtrsim 5$, which indicates that as $\tau$ becomes larger, the lower bound on the actual evolution time is tighter. And for a fixed mixedness parameter $0.8 \lesssim r_0 \lesssim 0.9$ in FIG.~\ref{errorfigureADchannel}(b), $0.7 \lesssim r_0 \lesssim 0.8$ in FIG.~\ref{errorfigureADchannel}(e), $0.6 \lesssim r_0 \lesssim 0.7$ in FIG.~\ref{errorfigureADchannel}(h), $0.7 \lesssim r_0 \lesssim 0.9$ in FIG.~\ref{errorfigureADchannel}(c), $0.6 \lesssim r_0 \lesssim 0.8$ in FIG.~\ref{errorfigureADchannel}(f), $0.5 \lesssim r_0 \lesssim 0.7$ in FIG.~\ref{errorfigureADchannel}(c), respectively, $0.8 \lesssim \tilde{\delta}_{JS}({\tau})\lesssim 1$ for  all $\tau > 0$, which indicates that as $\tau$ becomes larger, the lower bound on the actual evolution time becomes increasingly loose, effectively yielding the trivial condition $\tau \gtrsim 0$. Furthermore, from a cross-comparison perspective across the multiple figures, we conclude that for a fixed $\theta$, the larger the value of $\phi$, the larger the normalized relative error.

\section{Conclusion}\label{sect4}
We have introduced and investigated quantum speed limits derived from quantum‑state texture (QST) variations, establishing general lower bounds on the evolution time of quantum systems. For the QST measure defined via trace distance, we illustrate the relation between QST and QSLs using dephasing dynamics, while dissipative dynamics is adopted as an exemplary platform for state rugosity. Furthermore, for the Jensen‑Shannon divergence‑induced QST, we derive time lower bounds for systems evolving under completely positive and trace‑preserving dynamics and perform quantitative calculations for the amplitude‑damping channel. Notably, our derived bounds are readily applicable to other canonical open‑system models, including the phase damping and depolarizing channels. Although we illustrate our results with examples for two‑dimensional systems, the QSL lower bounds established in this paper are also applicable to high‑dimensional systems, such as the erasure channel, the $d$-dimensional dephasing channel, the $d$-dimensional depolarizing channel, and the Werner--Holevo channel \cite{WildeBook}, as well as the  local amplitude damping channel with finite temperature \cite{GuoQIP1399,GuoQIP2851}. It would be of interest to generalize our results in composite systems. The QSL relations obtained in this work may find promising applications in quantum thermodynamics, quantum control theory, and practical quantum engineering technologies.

\section*{Acknowledgments:}

S. M. Fei acknowledges the financial support from specific research fund of the Innovation Platform for Academicians of Hainan Province.

\end{document}